\documentclass[conference]{IEEEtran}
\IEEEoverridecommandlockouts
\usepackage[noadjust]{cite}
\usepackage[colorlinks,linkcolor=blue,citecolor=blue]{hyperref}
\usepackage{amsmath,amssymb,amsfonts}
\usepackage{algorithmic}
\usepackage{graphicx}
\usepackage{subfig}
\usepackage{textcomp}
\usepackage{xcolor}
\usepackage[capitalize,nameinlink]{cleveref}
\usepackage{mathrsfs}
\usepackage{amsthm}
\theoremstyle{remark}
\newtheorem{Remark}{Remark}

\newtheorem{Lemma}{Lemma}
\makeatletter
\renewenvironment{proof}[1][\proofname]{\par
\pushQED{\qed}%
\normalfont \topsep0\p@\@plus0\p@\relax
\trivlist
\item\relax
{\itshape
#1\@addpunct{.}}\hspace\labelsep\ignorespaces
}{%
\popQED\endtrivlist\@endpefalse
\vspace{0.5\baselineskip}
}
\makeatother

\usepackage{etoolbox}
\pretocmd{\appendices}{\crefalias{section}{appendix}}{}{}
\crefname{figure}{Fig.}{Figs.}
\crefname{equation}{}{}
\crefname{Lemma}{Lemma}{Lemmas}

\def\BibTeX{{\rm B\kern-.05em{\sc i\kern-.025em b}\kern-.08em
    T\kern-.1667em\lower.7ex\hbox{E}\kern-.125emX}}
    
\begin{document}

\title{WiSLAT: A Simultaneous Device Localization and Target Tracking Method for Wi-Fi Systems}

\author{
    \IEEEauthorblockN{Chunxi Chen\IEEEauthorrefmark{1}, Jingwen Zhang\IEEEauthorrefmark{1}, Chao Yu\IEEEauthorrefmark{2}, Fan Liu\IEEEauthorrefmark{3}, Rui Wang\IEEEauthorrefmark{2}}
    \IEEEauthorblockA{\IEEEauthorrefmark{1} College of Semiconductors, Southern University of Science and Technology, Shenzhen, China}
    \IEEEauthorblockA{\IEEEauthorrefmark{2} Department of Electronic and Electrical Engineering, Southern University of Science and Technology, Shenzhen, China}
    \IEEEauthorblockA{\IEEEauthorrefmark{3} School of Information Science and Engineering, Southeast University, Nanjing, China}
    \thanks{This work was supported in part by the National Science and Technology Major Projects of China under Grant 2025ZD1302000, and the National Natural Science Foundation of China (NSFC) under Grant 62522107.}
}

\maketitle
\IEEEpeerreviewmaketitle
\begin{abstract}
It has been shown that the channel state information (CSI) of a Wi-Fi system can be exploited to localize Wi-Fi devices or track trajectory of a moving target. In the existing literature, both sensing tasks are treated separately and some prior information is usually requested, including the signal fingerprints, the locations of some anchor devices in the Wi-Fi system, and etc. In the proposed WiSLAT method, however, it is shown that both sensing tasks can assist each other, such that the request on prior system information can be eliminated. Particularly, in a Wi-Fi system with an access point (AP) and at least three stations, where the locations of the stations are unknown, the WiSLAT is designed to detect the Doppler frequencies of the downlink CSI at the stations, such that their locations and the trajectory of the target with respect to the AP can be inferred. The joint detection can be conducted by searching the optimal stations' locations and target's trajectory, such that their corresponding Doppler frequencies fit the observed ones best. Due to the tremendous non-convex search space, a low-complexity sub-optimal algorithm integrating alternate optimization, extended Kalman filter and density-based clustering is proposed in WiSLAT. Experiments conducted in indoor environments demonstrate the effectiveness of WiSLAT, achieving a median trajectory-tracking error of 0.68 m.
\end{abstract}

\begin{IEEEkeywords}
Wi-Fi sensing, device localization, passive trajectory tracking.
\end{IEEEkeywords}
% === reduce the line spacing ===
\setlength{\belowdisplayskip}{0pt}
\setlength{\abovedisplayskip}{0pt}
\section{Introduction}
The Wi-Fi signal has been widely exploited in a number of sensing scenarios, including motion recognition, passive target tracking, device localization and etc. In most of the existing literature, the above sensing scenarios are addressed in an isolated manner with dedicated prior information. The motion recognition and passive target tracking usually rely on multiple Wi-Fi stations, whose locations are precisely known in advance. The Wi-Fi device localization usually requests a measurement of signal fingerprint in advance, which also raise a localization request, or prior location knowledge of multiple coordinated stations. Hence, it is interesting to investigate the Wi-Fi sensing technique at a cold start, where the requirement on the system's prior knowledge is minimized or eliminated.

There have been a number of works investigating the tracking technique for a passive target via Wi-Fi signal. These works can be generally classified into two distinct paradigms: data-driven approaches and model-driven approaches. Data-driven methods mainly exploit the characters or statistics of Channel State Information (CSI) to determine the location of a passive target \cite{monophy}, \cite{pilot}. Note that the CSI extracted from commercial off-the-shelf (COTS) Wi-Fi devices often exhibits significant noise, particularly phase noise. A number of research efforts have been focused on denoising techniques to improve signal quality. For example, CARM \cite{carm} and Widar \cite{widar} systems applied Principal Component Analysis (PCA) to leverage signal correlations across different subcarriers and remove noise, while preserving signal components introduced by motion. 
IndoTrack \cite{indotrack} and Widar2.0 \cite{widar2} leveraged the fact that random phase offsets are identical among different antennas on the same Network Interface Card (NIC), and employed conjugate multiplication to eliminate the phase noise. 
Based on the same fact, FarSense \cite{farsense} and WiTraj \cite{witraj} proposed the CSI-ratio or CSI-quotient methods to resolve the ambiguity issue due to the multiplication terms between dynamic and static components in CSI.
More recent works \cite{rtcsi} and ML-track \cite{mltrack} used  Round-Trip CSI (RTCSI) and Multi-Link RTCSI (ML-RTCSI) to eliminate the effect of Carrier Frequency Offset (CFO), using the fact that CFO of the uplink and downlink are opposite.

On the other hand, model-based methods utilized the Time-of-Flight (ToF) \cite{multitrack}, Angle-of-Arrival (AoA) \cite{matrack}, Doppler Frequency Shift (DFS) \cite{widar,witraj,rtcsi,mltrack} or their combinations \cite{indotrack,widar2,widfs} in the localization or tracking of a passive target. The ToF and AoA data extracted from commercial off-the-shelf (COTS) Wi-Fi devices often suffer from limited antenna number and low resolution. Hence in \cite{MultiAoA}, it requires multiple receive stations to jointly reduce measurement errors. The DFS-based methods are usually more robust against the detection error. For example, WiPIHT\cite{WiPIHT} eliminates reliance on fixed initial user locations by extracting location-independent velocity features and decoupling motion direction. 
However, all these localization or tracking methods for a passive target rely on more than one communication devices, where their relative locations are known in advance. Hence, it is still unknown that how to achieve passive target tracking without such prerequisites on devices' location information.

There have also been significant research efforts devoted to the localization of Wi-Fi devices, which can be categorized into two classes: fingerprint-based localization and geometry-based localization. 
The fingerprint-based localization relies on extensive prior measurements of a database mapping locations to Wi-Fi signal signatures \cite{fingerprint3D,CRISLoc}. 
Geometry-based localization methods leverage the estimation of the physical or geometric parameters of signal propagation to directly infer the locations of Wi-Fi devices. 
For example, in ToneTrack\cite{ToneTrack} fused multi-Tone Time of Arrival (ToA) to overcome the temporal resolution barrier posed in narrowband channel.
In \cite{tdoa}, the Time Difference of Arrival (TDoA) method estimate the distance by measuring the difference of propagation time of radio signals between devices, requiring strict clock synchronization between the transmitter and the receiver. In \cite{TriangAoA,MultiAoA,SpotFi}, methods based on AoA estimate the direction of a received signal by measuring the phase or delay differences across an antenna array at the receiver, then combine multiple observations to determine the location. In the above works on device localization, the prior information on the devices' relative locations or fingerprints is necessary. In summary, prior system or environment information is necessary in localization or tracking of either a passive target or Wi-Fi devices in the existing literature.

In this paper, we would like to shed some lights on the above open problem by proposing a method of simultaneous localization and tracking (SLAT) for Wi-Fi systems, namely WiSLAT. The WiSLAT is designed to track the trajectory of one single moving target without prior location knowledge on the sensing stations (STAs) relative to the AP. Instead, these unknown locations as well as the unknown trajectory of one moving target can be estimated from the Doppler detections of the stations jointly, as long as the number of detecting stations is not less than 3. Particularly in WiSLAT, the joint detection of the stations' locations and the moving target's trajectory is formulated as a minimum mean square error (MMSE) problem, and a low-complexity solution algorithm is proposed. In summary, the main contributions of this work are as follows:
\begin{enumerate}
\item To the best of our knowledge, this is the first work investigating the joint estimation of Wi-Fi stations' locations and a moving target's trajectory. Both estimations actually facilitate each other in the proposed WiSLAT. Thus, it is infeasible to estimate the Wi-Fi stations' locations without the moving target, and vice versa. 
\item In order to solve the above joint estimation problem, we propose a low-complexity algorithm to address the issue of significant computation complexity. It consists of a coarse search stage, yielding a good initial solution, and a iterative fine adjustment stage, refining the locations with gradient-based search.
\item The experiments are conducted to verify the performance of the WiSLAT method. It is demonstrated that the median error of trajectory tracking and devices localization are 0.68m and 1.07m, respectively.  Moreover, by averaging the results of multiple trajectories, the final error of devices localization is reduced to 0.45m on average.
\end{enumerate}

The remainder of the paper is organized as follows. Section II presents the motion model and the signal model, where the proposed WiSLAT method will be applied. Section III formulates the  joint receive stations' localization and target's tracking design as a MMSE problem. 
Section IV proposes a low-complexity optimization algorithm to solve the above MMSE problem. The experimental setup, results, and in-depth discussions are provided in Section V. Finally, the conclusion is drawn in Section VI.

\section{System Model}

The WiSLAT method is designed for Wi-Fi systems to track the trajectory of a moving target with respect to the access point (AP), and meanwhile, detect the locations of the Wi-Fi stations, which participate in the trajectory tracking. Thus, it is not necessary to know the their locations in advance. The trajectory tracking and location detection are jointly designed. It is not a trivial aggregation of two tasks. In fact, any of the two tasks cannot be achieved alone. We shall refer to the above joint design as Simultaneous device Localization and trajectory Tracking (SLAT). In this section, we first establish the motion model connecting the target's trajectory (including the initial position and time-varying velocity) with the Doppler measurements, then elaborate the method adopted by WiSLAT in Doppler detection.

\subsection{Motion Model}
\begin{figure}[!h]
    \centering
    \includegraphics[width=0.9\columnwidth]{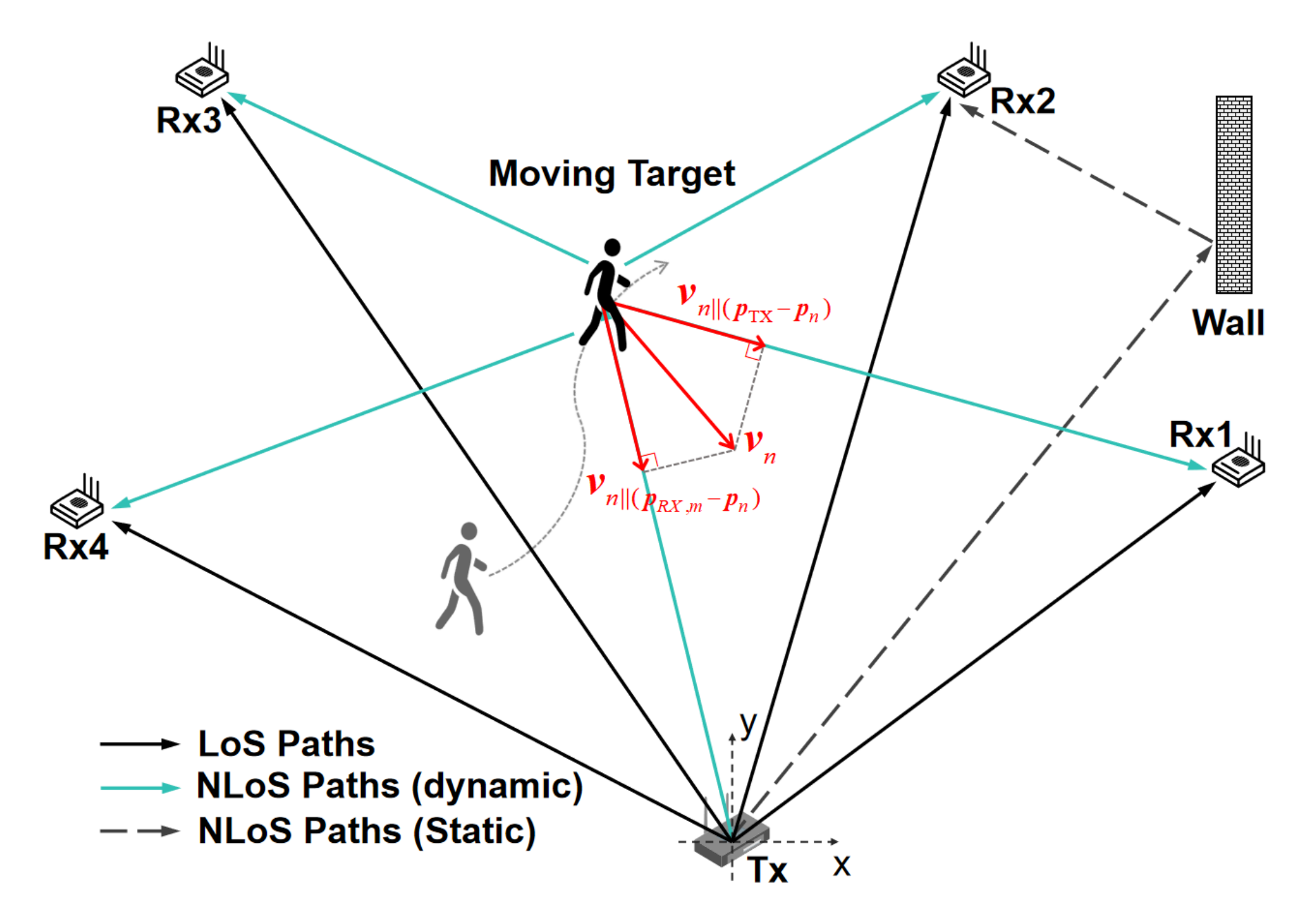}
    \caption{An scenario illustration of simultaneous devices localization and trajectory tracking (SLAT) with one AP and four receive stations.}
    \label{fig:overview}
\end{figure}
The proposed WiSLAT is designed for a Wi-Fi system consisting of one AP and at least three stations. The stations receive downlink signals from the AP, such that the downlink channel state information (CSI) can be estimated. Denote the $m$-th receive station as
Rx-$m\,(m=1,...,M)$. A single target moves within the coverage area, such that Doppler effect is raised at the downlink CSI of the receive stations, as shown in the \cref{fig:overview}.

In the two-dimensional plane of the detection, let $\boldsymbol{p}_\mathrm{TX}=[0,0]^\mathsf{T}$ and  $\boldsymbol{p}_{\mathrm{RX},m}=[x_{\mathrm{RX},m},y_{\mathrm{RX},m}]^\mathsf{T}$ ($m=1,...,M$) denote the position of AP and Rx-$m$, respectively. 
The target's continuous trajectory is uniformly sampled with a time interval $\Delta t$.
 Let $\boldsymbol{p}_n=[x_n,y_n]^\mathsf{T}$ be the position of the target at the $n$-th time instance (the beginning of the $n$-th time interval), the trajectory of the target can be represented by the matrix 
 \begin{equation}
    \mathbf{J}=[\boldsymbol{p}_1,\boldsymbol{p}_2,...,\boldsymbol{p}_N],
 \end{equation} 
 where it is assumed that the target moves for $N-1$ time intervals.

With a sufficiently small $\Delta t$, the target's velocity within a number of successive time intervals can be treated as a constant. Let $\boldsymbol{v}_n=[v_{x_n},v_{y_n}]^\mathsf{T}$ be the velocity of the target in the $n$-th time interval, the motion of the target can be represented as 
\begin{equation}
    \boldsymbol{p}_n = \boldsymbol{p}_{n-1}+\boldsymbol{v}_{n-1}\Delta t, \ n=2,3,...,N.
    \label{con:pn}
\end{equation}
Hence, the trajectory of the target can be equivalently represented by the matrix
\begin{equation}
    \mathbf{J}=[\boldsymbol{p}_1,\boldsymbol{v}_1,...,\boldsymbol{v}_{N-1}].\label{con:Jpv}
\end{equation}

As a result, the Doppler frequency at the Rx-$m$ in the $n$-th time interval ($n=1,2,...,N-1$) raised by the moving target  can be written as 
\begin{equation}
    f^d_{m}(n) = \frac{1}{\lambda}\left(\frac{\boldsymbol{p}_\mathrm{TX} - \boldsymbol{p}_n}{|\boldsymbol{p}_\mathrm{TX} - \boldsymbol{p}_n|} + \frac{\boldsymbol{p}_{\mathrm{RX},m} -\boldsymbol{p}_n}{|\boldsymbol{p}_{\mathrm{RX},m} - \boldsymbol{p}_n|}\right)^\mathsf{T} \boldsymbol{v}_n  ,
    \label{con:fd}
\end{equation}
where $\lambda$ denotes the carrier wavelength\footnote{In fact, the frequencies of subcarriers are different in a OFDM signal, leading to slightly different Doppler shifts. In this paper, such difference can be neglected in the localization and trajectory tracking. Hence, we use the central carrier's wavelength as the value of $\lambda$.}. The Doppler frequency \cref{con:fd} is given by the sum of the projections of  $\boldsymbol{v}_n$ onto the vectors $(\boldsymbol{p}_\mathrm{TX} - \boldsymbol{p}_n)$ and $(\boldsymbol{p}_{\mathrm{RX},m} - \boldsymbol{p}_n)$, respectively, as illustrated as in \cref{fig:overview}.

\subsection{Doppler Detection via CSI}
\begin{figure}[!h]
    \centering
    \includegraphics[width=0.99\columnwidth]{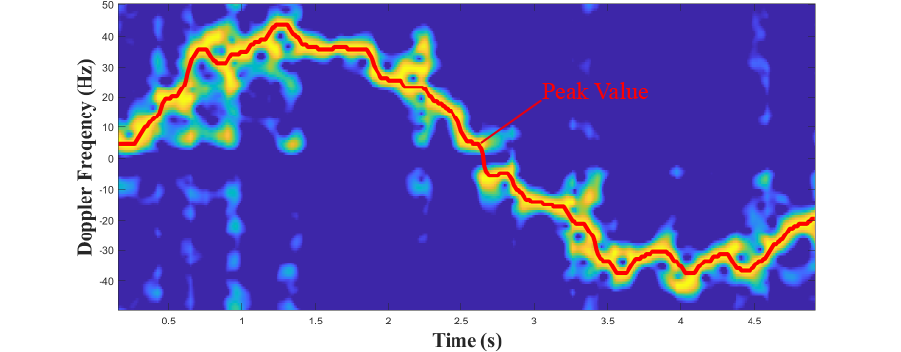}
    \caption{An example of time-frequency spectrogram in the Doppler detection.}
    \label{fig:doppler}
\end{figure}
The Doppler frequencies of the moving target at all the time intervals are detected in WiSLAT. Without loss of generality, the Doppler detection for the $n$-th time interval is elaborated in this part. In the detection, it is assumed that the duration of time interval $\Delta t$ is sufficiently small, such that the Doppler frequency raised by the moving target is quasi-static within $2Q+1$ time intervals. Hence, the Doppler frequency of the $n$-th time interval can be approximately obtained from the Downlink CSI, which is estimated once in each time interval, from the $(n-Q)$-th time interval to the $(n+Q)$-th one.

Let $H_{p,m}(q)$ be the CSI sample of one subcarrier from one antenna of the AP to the $p$-th antenna ($p=1,2$) of Rx-$m$ in the $q$-th time interval ($q=n-Q,...,n+Q$). It can be expressed as 
\begin{equation}
    \begin{aligned}
    &{H}_{p,m}(q) =\sum^{L}_{l=0}{A}_{p,m,l} e^{-j2\pi\left [f^{d}_{p,m,l}(q-n+Q)\Delta t+{\omega}_m(q)\right]}\\
    & =e^{-j2\pi{\omega}_m(q)} \left[ {H}_{p,m,0}+ {A}_{p,m,1} e^{-j2\pi f^{d}_{p,m,1}(q-n+Q)\Delta t}\right],
    \end{aligned}\label{con:channel}
\end{equation}
where $L$ is the number of NLoS propagation paths, the $0$-th and $1$-st paths refer to the LoS path and the NLoS path scattered off the moving target, ${A}_{p,m,l}$ and $f^{d}_{p,m,l}$ denote the complex path gain and Doppler frequency of the $l$-th path respectively, $\omega_m(q)$ represents the time-varying phase offset due to asynchronous hardware between the transmitter and the receiver. Assuming the Doppler 
frequency except the first path is zero,
\begin{equation}
    {H}_{p,m,0} \approx \sum_{l \neq 1}{A}_{p,m,l}
\end{equation}
is quasi-static in the Doppler detection.

Because the existence of the time-varying phase offset $\omega_m(p)$, the Doppler raised by the moving target cannot be directly detected from the CSI of single antenna. Note that all antennas on the same Wi-Fi adapter share a common RF oscillator, they have the same hardware phase offset $\omega_m(p)$ as well. Hence, the method of cross-antenna signal ratio proposed in \cite{farsense} is applied to eliminate the hardware phase offset and detect the desired Doppler 
frequencies. Since the two antennas of each receive station are usually close, we adopt the approximation  $f^d_{2,m,1} \approx f^d_{1,m,1}$ in the detection window. Hence, the ratio of two antenna's CSI in the $q$-th time interval ($q=n-Q,...,n+Q$) can be derived as 
\begin{equation}
\begin{aligned}
    {R}_m(q)&  = \frac{{H}_{1,m}(q)}{{H}_{2,m}(q)}\\
    & = \frac{{H}_{1,m,0}+ {A}_{1,m,1} e^{-j2\pi f^{d}_{1,m,1}(q-n+Q)\Delta t}} {{H}_{2,m,0}+ {A}_{2,m,1} e^{-j2\pi f^{d}_{1,m,1}(q-n+Q)\Delta t}}\\
    & = \frac{{H}_{1,m,0}+ {A}_{1,m,1} z_q} {{H}_{2,m,0}+ {A}_{2,m,1} z_q}\\
    &= \frac{{A}_{2,m,1}{H}_{1,m,0} - {A}_{1,m,1}{H}_{2,m,0} }{{A}_{2,m,1}^2}\\
    &\cdot \frac{1}{z_q + \frac{{H}_{2,m,0}}{{A}_{2,m,1}}} + \frac{{A}_{1,m,1}}{{A}_{2,m,1}},
\end{aligned}\label{con:ratio}
\end{equation}
where the time-varying hardware phase offset of the two antennas is cancelled in the second equality,
\begin{equation}
    z_q \triangleq e^{-j2\pi f^{d}_{1,m,1}(q-n+Q)\Delta t}
\end{equation}
rotates with $q$ in a frequency of $f^{d}_{1,m,1}$, the fourth equality is due to the Mobius transformation \cite{Moebius}.

As discussed in \cite{farsense}, due to the existence of static LoS path, the ratio ${R}_m(q)$ and $z_q$ are in phase. They share the same frequency as $f^{d}_{1,m,1}$, which is the desired Doppler frequency $f_m^d(n)$ as defined in \cref{con:fd}. Hence, the Short-Time Fourier Transform (STFT) is applied to detect the frequency components in ${R}_m(q)$. 

Particularly, we define the time-frequency matrix $\mathbf{P}_m(\xi,n)$ as
\begin{equation}
    \mathbf{P}_m(\xi,n) =
    \sum^{Q}_{i=-Q} {R}_m(n+i)e^{-j2\pi\frac{i+Q}{N_\text{FFT}}\xi},
\end{equation}
where $N_\text{FFT}$ is the FFT size. 
An example of $\mathbf{P}_m(\xi,n)$ is illustrated in \cref{fig:doppler}. It can be observed that in addition to the dominant Doppler component of the moving target, multiple weaker, closely-spaced Doppler components could also exist. This is due to the presence of multiple scattering points on moving target. Therefore, we choose the Doppler component with the maximum magnitude as the detected Doppler shift of the moving target. Thus, the detected Doppler shift of the $n$-th time interval is given by
\begin{equation}
    \tilde{f^d_m}(n) =\begin{cases}
    \frac{\xi^*}{N_\text{FFT}}f_s,& 0\leq\xi^*<\frac{N_\text{FFT}}{2},\\
    \frac{\xi^*}{N_\text{FFT}}f_s-f_s,& \frac{N_\text{FFT}}{2}\leq\xi^*<{N_\text{FFT}},
    \end{cases}
\end{equation}
where
\begin{equation} 
\xi^* = \arg\max_{\xi}|\mathbf{P}_m(\xi,n)| \ \text{and}\ f_s = \frac{1}{\Delta t}.
\end{equation}

\section{WiSLAT Problem Formulation}
The WiSLAT method is proposed to localize the fixed receive stations and the moving target jointly. Note that their location information can not be directly inferred from the Doppler measurement. In this section, we shall formulate the joint localization and tracking design as an indirect matching problem: find the best locations of the receive stations and the trajectory of the target, such that the corresponding time-varying Doppler 
frequencies fit the real measurements with the minimum Mean Square Error (MSE).

Denote the aggregation of the $M$ receive stations' locations as
\begin{equation}
    \mathbf{R}=[\boldsymbol{p}_{\mathrm{RX},1},...,\boldsymbol{p}_{\mathrm{RX},M}].
\end{equation}
Given the trajectory and locations $(\mathbf{J},\mathbf{R})$, the aggregation of the corresponding Doppler frequencies at the M receive stations can be expressed as
\begin{equation}
    \mathbf{Z}(\mathbf{J},\mathbf{R}) = 
    \begin{bmatrix}
f^d_1(1) & \cdots & f^d_1(N-1)\\
\vdots & \cdots & \vdots\\
f^d_M(1) & \cdots & f^d_M(N-1)\\
\end{bmatrix}  ,
\end{equation}
where $f^d_m(n)$ can be calculated by \cref{con:fd}. On the other hand, the matrix aggregating all the measured Doppler frequencies at all the receive stations in all the time interval can be written as
\begin{equation}
    \widetilde{\mathbf{Z}} =  \begin{bmatrix}
\tilde{f^d_1}(1) & \cdots & \tilde{f^d_1}(N-1)\\
\vdots & \cdots & \vdots\\
\tilde{f^d_M}(1) & \cdots & \tilde{f^d_M}(N-1)\\
\end{bmatrix}.
\end{equation}
For the notation convenience, the $m$-th row vectors of $\mathbf{Z}(\mathbf{J},\mathbf{R})$ and $\widetilde{\mathbf{Z}}$ are denoted as $\boldsymbol{z}_m(\mathbf{J}^{(\kappa-1)},\boldsymbol{p}_{\mathrm{RX},m})$ and $\widetilde{\boldsymbol{z}}_m$, respectively.

Accordingly, the MSE between $\mathbf{Z}$ and $\widetilde{\mathbf{Z}}$ is defined as
\begin{equation}
    g(\mathbf{J},\mathbf{R},\widetilde{\mathbf{Z}}) = \frac{1}{M(N-1)}\| \widetilde{\mathbf{Z}} -  \mathbf{Z}(\mathbf{J},\mathbf{R})\|_\mathrm{F}^2 ,
    \label{con:g}
\end{equation}
where $\|\cdot\|_\mathrm{F}$ denotes the Frobenius Norm. As a result, the joint search of the receive stations' locations and the target's trajectory can be formulated as the following Minimum MSE (MMSE) problem:
\begin{equation}
    \mathcal{P}_{0} : \quad \left\{\mathbf{J}^*,\mathbf{R}^*\right\}=\arg \min_{\left\{\mathbf{J},\mathbf{R}\right\}}g(\mathbf{J},\mathbf{R},\widetilde{\mathbf{Z}}) .
\end{equation}

The variables of the above optimization problem consists of the locations of the $M$ receive stations, the starting point and velocities of the trajectory. Exhaustive search of them will lead to prohibitive computation complexity. In the following section, a low-complexity search algorithm will be proposed. 

\section{Low-Complexity Solution Algorithm}
\begin{figure*}[ht]
    \centering

    \includegraphics[width=0.9\textwidth]{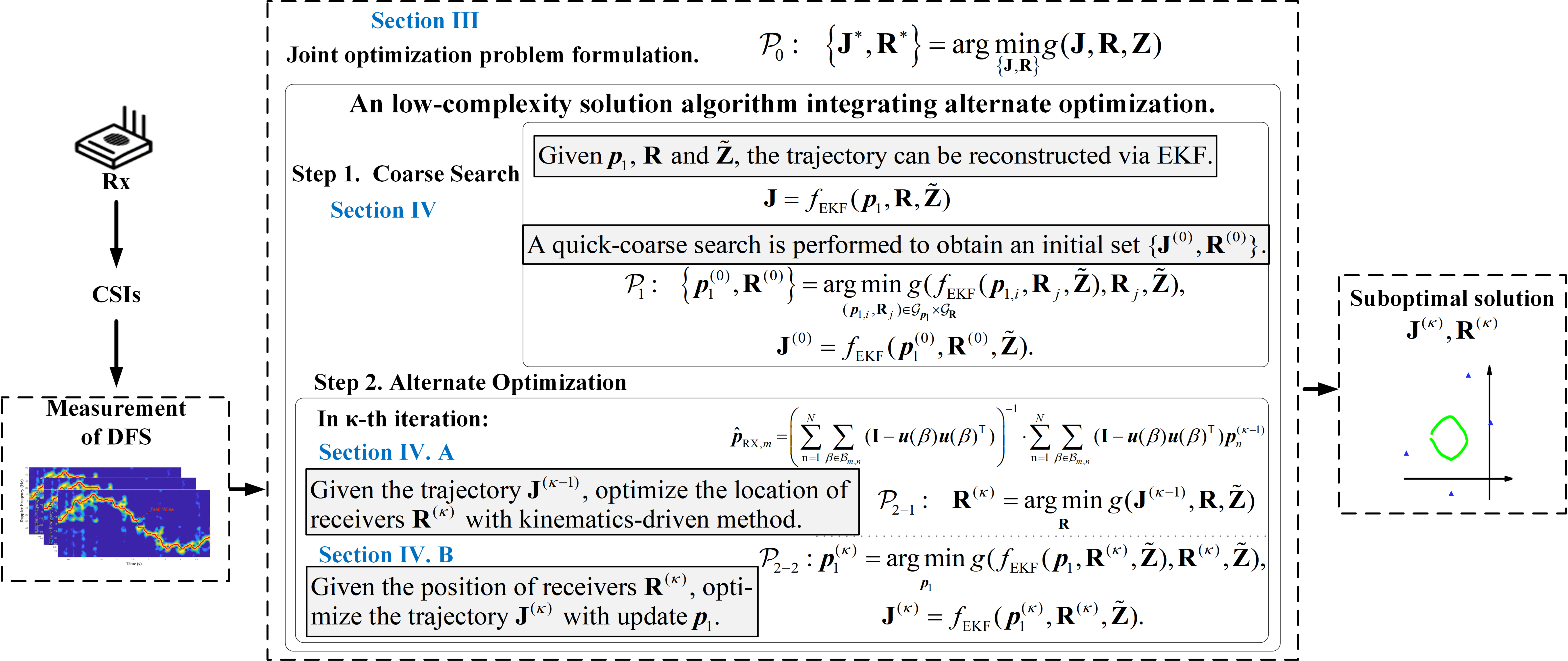}
    
    \caption{An workflow of low-complexity sub-optimal algorithm integrating alternate optimization.}
    \label{fig:workflow}
\end{figure*}

Note that given the locations of the receive stations $\mathbf{R}$ and the starting point of the trajectory $\boldsymbol{p}_1$, the reconstruction of the trajectory according to the measured Doppler frequencies can be conducted via the Extended Kalman Filter (EKF). It is essential to determine $\mathbf{R}$ and $\boldsymbol{p}_1$ in solving the Problem $\mathcal{P}_{0}$. However, the objective of the Problem $\mathcal{P}_{0}$ is non-convex with respect to $\mathbf{R}$ and $\boldsymbol{p}_1$. The gradient-based method may be trapped at local optimal solutions, which may be far from the true locations. Hence, a two-stage search method is proposed in WiSLAT, which consists of a {\it coarse search} from a number of candidates and a {\it fine adjustment} based on alternate optimization. An overall workflow of the 
proposed method is illustrated in \cref{fig:workflow}.

To avoid the potential trapping at local optimal solution deviating significantly from the true locations, it is necessary to start the optimization of the Problem $\mathcal{P}_0$ from a good initial solution of $\mathbf{J}$ and $\mathbf{R}$. Without any prior information on them, the coarse search stage will compare the MSE of multiple candidates. Particularly, denote $\mathcal{G}_{\boldsymbol{p}_1}$ and $\mathcal{G}_\mathbf{R}$, where $$\mathcal{G}_{\boldsymbol{p}_1}=\{\boldsymbol{p}_{1,i}\ |\ i=1,...,|\mathcal{G}_{\boldsymbol{p}_1}|\}$$ and $$\mathcal{G}_\mathbf{R}=\{\mathbf{R}_j\ |\ j=1,...,|\mathcal{G}_\mathbf{R}|\},$$ as the candidate sets of the starting point and receive stations' locations. 
The trajectory reconstructed from $\boldsymbol{p}_{1,i}$ and $\mathbf{R}_j$ according to the EKF can be expressed as

\begin{equation}
    \mathbf{J}_{i,j} = f_\mathrm{EKF}(\boldsymbol{p}_{1,i}, \mathbf{R}_j,\widetilde{\mathbf{Z}}),
    \label{con:EKF}
\end{equation}
where the function $f_\mathrm{EKF}$ represents the EKF-based trajectory reconstruction, as elaborated in \cref{app:EKFAppendix}. Hence, the initial solution of the staring point and receive stations' locations, denoted as  $\{\boldsymbol{p}_{1}^{(0)},\mathbf{R}^{(0)}\}$, can be obtained as
\begin{equation}
    \mathcal{P}_1:\quad\left\{\boldsymbol{p}_1^{(0)},\mathbf{R}^{(0)}\right\}=\mathop{\arg \min}\limits_{(\boldsymbol{p}_{1,i},\mathbf{R}_j) \in \mathcal{G}_{\boldsymbol{p}_1} \times \mathcal{G}_\mathbf{R}}g(\mathbf{J}_{i,j},\mathbf{R}_j, \widetilde{\mathbf{Z}}).
\end{equation}
Moreover, the initial solution of trajectory $\mathbf{J}^{(0)}$ is derived from $\boldsymbol{p}_1^{(0)}$ and $\mathbf{R}^{(0)}$ according to
\begin{equation}
    \mathbf{J}^{(0)} = f_\mathrm{EKF}(\boldsymbol{p}_{1}^{(0)}, \mathbf{R}^{(0)},\widetilde{\mathbf{Z}}).
\end{equation}

Given the initial solution $(\mathbf{J}^{(0)},\mathbf{R}^{(0)})$, their fine adjustment can be conducted by alternate optimization as follows. First, given the trajectory of the moving target, find the best locations of receive stations to minimize the MSE of Doppler measurements. Then, given the optimized locations of receive stations, continue to adjust the trajectory of the moving target.  Particularly, let $\mathbf{R}^{(\kappa)}$, $\boldsymbol{p}_1^{(\kappa)}$, and $\mathbf{J}^{(\kappa)}$ denote the receivers' locations, starting point, and trajectory after the $\kappa$-th iteration, respectively. The alternate optimization at the $\kappa$-th iteration, $\kappa = 1,2,\cdots$, can be decomposed as
\begin{equation}
\begin{aligned}
\mathcal{P}_{2-1} :   \mathbf{R}^{(\kappa)}&=\mathop{\arg \min}\limits_{\mathbf{R}}g(\mathbf{J}^{(\kappa-1)},\mathbf{R},\widetilde{\mathbf{Z}}),\\
\mathcal{P}_{2-2} :\boldsymbol{p}_1^{(\kappa)}&=\mathop{\arg \min}\limits_{\boldsymbol{p}_1}g(f_\mathrm{EKF}(\boldsymbol{p}_1, \mathbf{R}^{(\kappa)},\widetilde{\mathbf{Z}}),\mathbf{R}^{(\kappa)},\widetilde{\mathbf{Z}}).
\end{aligned}
\end{equation}
Moreover, given $\mathbf{R}^{(\kappa)}$ and $\boldsymbol{p}_1^{(\kappa)}$, the trajectory can be reconstructed as
\begin{equation}
    \mathbf{J}^{(\kappa)} = f_\mathrm{EKF}(\boldsymbol{p}_{1}^{(\kappa)}, \mathbf{R}^{(\kappa)},\widetilde{\mathbf{Z}}).
\end{equation}

In the following parts, the solution algorithms for the above two sub-problems are presented, respectively.

\subsection{Optimization of $\mathbf{R}$}

Given the trajectory of the moving target $\mathbf{J}^{(\kappa-1)}$ and the Doppler measurements $\widetilde{\mathbf{Z}}$, the search of the receive stations' locations in Problem $\mathcal{P}_{2-1}$ can actually be decoupled. Denote the search of the $m$-th receive station ($m=1,2,...,M$) as
\begin{equation}
     \mathcal{P}_{2-1}^m :   \boldsymbol{p}_{\mathrm{RX},m}^{(\kappa)}=\mathop{\arg \min}\limits_{\boldsymbol{p}_{\mathrm{RX},m}}g_m(\mathbf{J}^{(\kappa-1)},\boldsymbol{p}_{\mathrm{RX},m},\widetilde{\boldsymbol{z}}_m),
\end{equation}
where $\widetilde{\boldsymbol{z}}_m$ is the $m$-th row of measurement matrix $\widetilde{\mathbf{Z}}$,
\begin{equation}
    g_m(\mathbf{J}^{(\kappa-1)},\boldsymbol{p}_{\mathrm{RX},m},\widetilde{\boldsymbol{z}}_m)=\frac{1}{N-1}\|\widetilde{\mathbf{z}}_m -  \boldsymbol{z}_m(\mathbf{J}^{(\kappa-1)},\boldsymbol{p}_{\mathrm{RX},m})\|_\mathrm{F}^2
\end{equation}
is the MSE of the measurements at the Rx-m, and $\boldsymbol{z}_m(\mathbf{J}^{(\kappa-1)},\boldsymbol{p}_{\mathrm{RX},m})$ is the $m$-th row vector of matrix $\mathbf{Z}(\mathbf{J}^{(\kappa-1)},\mathbf{R})$. 

Without loss of generality, we focus on the optimization of $\boldsymbol{p}_{\mathrm{RX},m}$ in $\mathcal{P}_{2-1}^m$, which is still non-convex. Hence, the performance of the gradient-based search algorithms for $\boldsymbol{p}_{\mathrm{RX},m}$ depends heavily on the initial solution of search. In this part, a kinematics-driven method is first proposed to obtain its initial solution. Then, the Levenberg-Marquardt(LM) algorithm\cite{lm} is applied to update the initial solution iteratively.

First of all, we have the following conclusion on the direction of the $m$-th receive station with respect to the moving target at the $n$-th time instance ($\forall m,n$).

\begin{Lemma}[Potential Directions of Receive Station]\label{lemma:beta}
  As illustrated in \cref{fig:motionmodel}, given the position $\boldsymbol{p}_n$ and velocity $\boldsymbol{v}_n$ of the moving target at the $n$-th time instance, as well as the actual Doppler frequency $f^d_m(n)$ at the Rx-$m$, the angle rotating\footnote{The anticlockwise rotation leads to a positive angle, and clockwise rotation leads to a negative angle.} from the velocity  $\boldsymbol{v}_n$ to the vector $\boldsymbol{p}_{\mathrm{RX},m}-\boldsymbol{p}_n$, denoted as $\beta_{m,n}$, is within the following set of two elements\footnote{ Given a coarse AoA estimate obtained through state-of-the-art methods (e.g., SpotFi\cite{SpotFi}), the $\beta_{m,n}$ angles that fall outside the AoA range can be neglected.}:
  \begin{equation}
    \begin{aligned}
         &\mathcal{B}_{m,n}(f^d_{m}(n))=\\
         &\bigg\{ \pm \arccos{\bigg(\frac{1}{|\boldsymbol{v}_n|}(\lambda f^d_{m}(n)-\frac{(\boldsymbol{p}_\mathrm{TX} -\boldsymbol{p}_n)^\mathsf{T}}{|\boldsymbol{p}_\mathrm{TX} -  \boldsymbol{p}_n|}\boldsymbol{v}_n)\bigg)}\bigg\}.
    \end{aligned}
  \end{equation}
  Moreover, denote $\boldsymbol{u}(\boldsymbol{v}_n,\beta_{m,n})$ as the unit direction vector of $\boldsymbol{p}_{\mathrm{RX},m}-\boldsymbol{p}_n$. It can be expressed in terms of $\beta_{m,n}$ as 
  \begin{equation}
      \boldsymbol{u}(\boldsymbol{v}_n,\beta_{m,n}) = [\cos (\angle \boldsymbol{v}_n + \beta_{m,n}),\sin(\angle \boldsymbol{v}_n +\beta_{m,n})]^\mathsf{T}.
  \end{equation}
\end{Lemma}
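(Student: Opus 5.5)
The plan is to start from the Doppler model \cref{con:fd} and isolate the receive-station term. Multiplying both sides by $\lambda$ and moving the AP term to the left gives
\begin{equation*}
\left(\frac{\boldsymbol{p}_{\mathrm{RX},m}-\boldsymbol{p}_n}{|\boldsymbol{p}_{\mathrm{RX},m}-\boldsymbol{p}_n|}\right)^\mathsf{T}\boldsymbol{v}_n
=\lambda f^d_m(n)-\frac{(\boldsymbol{p}_\mathrm{TX}-\boldsymbol{p}_n)^\mathsf{T}}{|\boldsymbol{p}_\mathrm{TX}-\boldsymbol{p}_n|}\boldsymbol{v}_n .
\end{equation*}
Every quantity on the right-hand side is known once $\boldsymbol{p}_n$, $\boldsymbol{v}_n$ and $f^d_m(n)$ are given, because $\boldsymbol{p}_\mathrm{TX}=[0,0]^\mathsf{T}$ is fixed. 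The left-hand side is the inner product of a unit vector with $\boldsymbol{v}_n$.

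Next I would write that unit vector in polar form. Let its angle measured from the direction of $\boldsymbol{v}_n$ be $\beta_{m,n}$, with the anticlockwise sign convention of the footnote. Then the vector equals $[\cos(\angle\boldsymbol{v}_n+\beta_{m,n}),\sin(\angle\boldsymbol{v}_n+\beta_{m,n})]^\mathsf{T}$, which is exactly $\boldsymbol{u}(\boldsymbol{v}_n,\beta_{m,n})$ and settles the second claim. Writing $\boldsymbol{v}_n=|\boldsymbol{v}_n|[\cos\angle\boldsymbol{v}_n,\sin\angle\boldsymbol{v}_n]^\mathsf{T}$ and applying the cosine difference identity turns the inner product into $|\boldsymbol{u}||\boldsymbol{v}_n|\cos\beta_{m,n}=|\boldsymbol{v}_n|\cos\beta_{m,n}$. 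Dividing by $|\boldsymbol{v}_n|$ gives
\begin{equation*}
\cos\beta_{m,n}=\frac{1}{|\boldsymbol{v}_n|}\left(\lambda f^d_m(n)-\frac{(\boldsymbol{p}_\mathrm{TX}-\boldsymbol{p}_n)^\mathsf{T}}{|\boldsymbol{p}_\mathrm{TX}-\boldsymbol{p}_n|}\boldsymbol{v}_n\right).
\end{equation*}

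Finally, restrict $\beta_{m,n}$ to $(-\pi,\pi]$. Since cosine is even and injective on $[0,\pi]$, the solutions of the equation above are exactly $\pm\arccos(\cdot)$, so $\beta_{m,n}\in\mathcal{B}_{m,n}(f^d_m(n))$. The set collapses to a single point when the argument equals $\pm1$.

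The main obstacle is well-definedness rather than any calculation. The argument of $\arccos$ must lie in $[-1,1]$, and the unit vectors must exist. The first condition holds automatically because $f^d_m(n)$ is the \emph{actual} Doppler frequency generated by \cref{con:fd}, so the right-hand side equals a cosine. For the second, I would state the implicit nondegeneracy assumptions: $\boldsymbol{v}_n\neq\boldsymbol{0}$, $\boldsymbol{p}_n\neq\boldsymbol{p}_\mathrm{TX}$ and $\boldsymbol{p}_n\neq\boldsymbol{p}_{\mathrm{RX},m}$.
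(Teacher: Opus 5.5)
Your proposal is correct and follows the same route as the paper. Both isolate the receive-station term in \cref{con:fd}, identify it as $|\boldsymbol{v}_n|\cos\beta_{m,n}$, and invert the cosine to obtain the two roots $\pm\arccos(\cdot)$; your remarks on the range $(-\pi,\pi]$, on the validity of the $\arccos$ argument for the actual Doppler frequency, and on the conditions $\boldsymbol{v}_n\neq\boldsymbol{0}$, $\boldsymbol{p}_n\neq\boldsymbol{p}_\mathrm{TX}$, $\boldsymbol{p}_n\neq\boldsymbol{p}_{\mathrm{RX},m}$ make explicit what the paper's one-line proof leaves implicit.
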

\begin{proof} \it{
As shown in \cref{fig:motionmodel}, the Doppler frequency at the Rx-$m$ is due to the projection of $\boldsymbol{v}_n$ on two directions: $\boldsymbol{p}_{\mathrm{RX},m}-\boldsymbol{p}_n$ and $\boldsymbol{p}_{\mathrm{TX}}-\boldsymbol{p}_n$. Hence, 
$$\lambda f^d_{m}(n)=\frac{(\boldsymbol{p}_\mathrm{TX} -\boldsymbol{p}_n)^\mathsf{T}}{|\boldsymbol{p}_\mathrm{TX} -  \boldsymbol{p}_n|}\boldsymbol{v}_n + |\boldsymbol{v}_n| \cos \beta_{m,n}. $$ As a result, the conclusion can be obtained.}
\end{proof}

\begin{figure}[h]
    \centering
    \includegraphics[width=0.8\columnwidth]{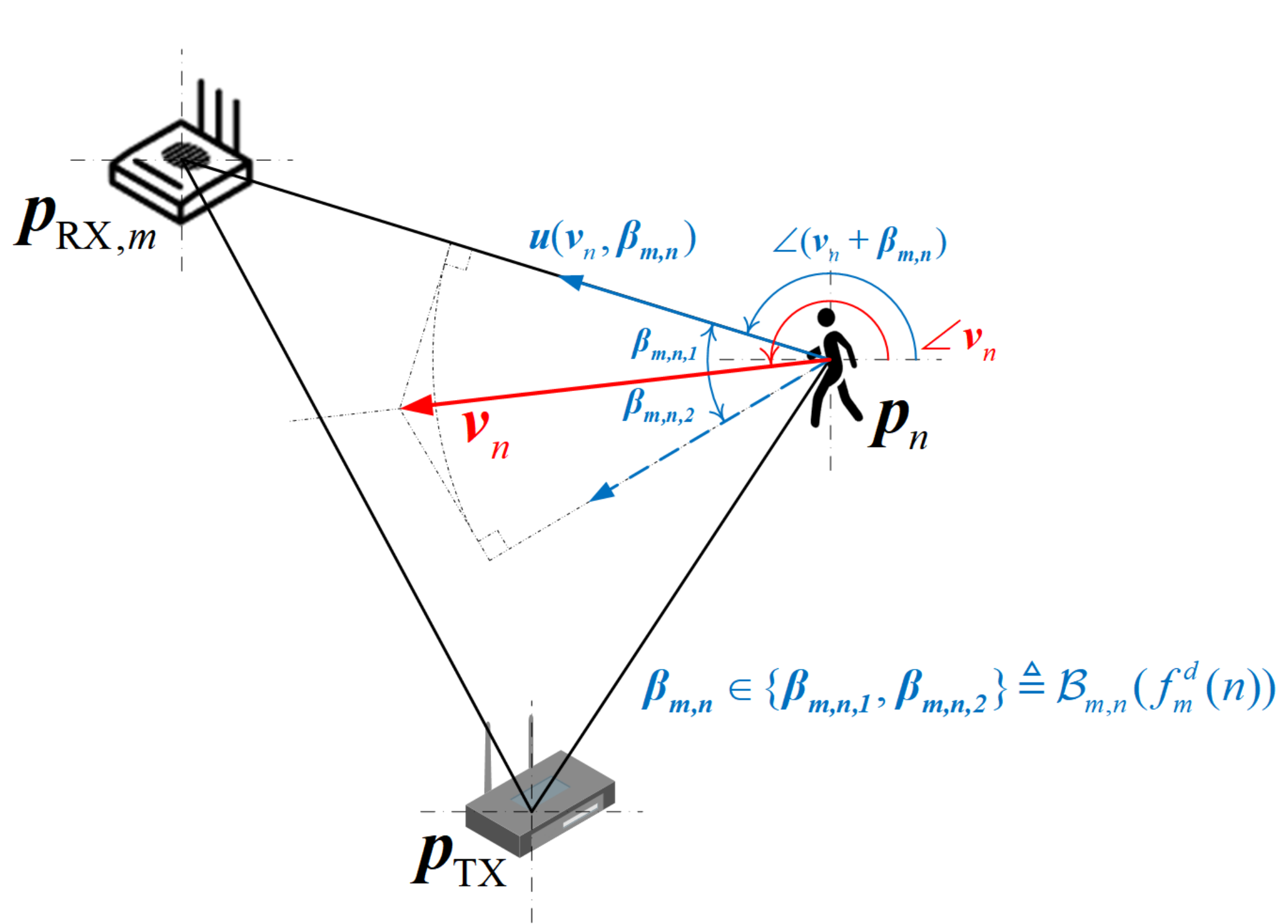}
    \caption{Kinematic diagram of a Tx-Rx pair and a moving target.}
    \label{fig:motionmodel}
\end{figure}

Hence, originated from the position $\boldsymbol{p}_n$, we can draw two rays with the directions of $\boldsymbol{u}(\boldsymbol{v}_n,\beta)$, $\beta \in \mathcal{B}_{m,n}$, such that the Rx-$m$ should be on one of the two rays. However, due to the noise in the Doppler detection and the error in the estimated trajectory, the actual location of Rx-$m$ may deviate from the above rays. Hence, we choose the following position, which has the minimum sum-distance to all the rays from $\{\boldsymbol{p}_1^{(\kappa-1)}, \boldsymbol{p}_2^{(\kappa-1)},..., \boldsymbol{p}_N^{(\kappa-1)}\}$, as the initial solution of Problem $\mathcal{P}_{2-1}^m$ \footnote{In order to suppress the computation complexity, we can choose a number of time instances, instead of all the $N$ time instances, to determine  $\hat{\boldsymbol{p}}_{\mathrm{RX},m}$.}:
  \begin{equation}
  \begin{aligned}
    \hat{\boldsymbol{p}}_{\mathrm{RX},m}\! =  \! \left(\sum^N_{\substack{n=1 \\\beta\in\mathcal{B}_{m,n}(\tilde{f^d_m}(n))}}(\mathbf{I}-\boldsymbol{u}(\boldsymbol{v}_n^{(\kappa-1)},\beta)\boldsymbol{u}(\boldsymbol{v}_n^{(\kappa-1)},\beta)^\mathsf{T})\right)^{-1}\\
    \cdot\sum^N_{\substack{n=1 \\\beta\in\mathcal{B}_{m,n}(\tilde{f^d_m}(n))}}(\mathbf{I}-\boldsymbol{u}(\boldsymbol{v}_n^{(\kappa-1)},\beta)\boldsymbol{u}(\boldsymbol{v}_n^{(\kappa-1)},\beta)^\mathsf{T})\boldsymbol{p}_n^{(\kappa-1)},
  \end{aligned}
  \end{equation}

The location of the $m$-th receive station can then be updated according to the LM algorithm. First, we have the following conclusion on the gradient of the MSE with respect to the location of the Rx-$m$.

\begin{Lemma}[Gradient of $g_m$]\label{lemma:gm}
  Given the trajectory of the moving target $\mathbf{J}^{(\kappa-1)}$ and the Doppler measurements $\widetilde{\mathbf{z}}_m$, the first and second orders of gradients of the $g_m$ (with respect to $\boldsymbol{p}_{\mathrm{RX},m}$) can be expressed as 
  \begin{equation}
      \nabla g_m(\mathbf{J}^{(\kappa-1)},\boldsymbol{p}_{\mathrm{RX},m},\widetilde{\boldsymbol{z}}_m) = 2\mathbf{G}_m^{\mathsf{T}}\boldsymbol{e}_{m}
      \label{con:gm_1stgrad}
  \end{equation}
  and 
  \begin{equation}
      \nabla^2 g_m(\mathbf{J}^{(\kappa-1)},\boldsymbol{p}_{\mathrm{RX},m},\widetilde{\boldsymbol{z}}_m)\approx 2\mathbf{G}_m^{\mathsf{T}}\mathbf{G}_m
  \end{equation}
  respectively, where the error vector $$\boldsymbol{e}_{m}=[\widetilde{\boldsymbol{z}}_m -  \boldsymbol{z}_m(\mathbf{J}^{(\kappa-1)},\boldsymbol{p}_{\mathrm{RX},m})]^{\mathsf{T}}$$ and $\mathbf{G}_m$ denotes the Jacobian matrix of $\boldsymbol{e}_{m}$ with respect to $\boldsymbol{p}_{\mathrm{RX},m}$. 
\end{Lemma}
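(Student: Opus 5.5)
The plan is to write $g_m$ as a scaled squared norm of the residual and apply the chain rule, in the standard Gauss--Newton way. Since $\mathbf{J}^{(\kappa-1)}$ is fixed, every entry of $\boldsymbol{e}_m$ depends only on $\boldsymbol{p}_{\mathrm{RX},m}$. We have $g_m=\frac{1}{N-1}\boldsymbol{e}_m^{\mathsf{T}}\boldsymbol{e}_m$, where the $n$-th entry is $e_{m,n}=\tilde{f^d_m}(n)-f^d_m(n)$ and $f^d_m(n)$ is given by \cref{con:fd} evaluated at $(\boldsymbol{p}_n^{(\kappa-1)},\boldsymbol{v}_n^{(\kappa-1)})$. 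The constant factor $\frac{1}{N-1}$ is positive and affects neither the LM search direction nor its minimizer. I would therefore either absorb it into $\mathbf{G}_m$ and $\boldsymbol{e}_m$, or state explicitly that both expressions hold up to this scaling.

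\emph{First-order gradient.} Write $\boldsymbol{d}_n=\boldsymbol{p}_{\mathrm{RX},m}-\boldsymbol{p}_n$. In \cref{con:fd} the TX term does not depend on $\boldsymbol{p}_{\mathrm{RX},m}$, so only the RX term contributes. The derivative of the unit vector $\boldsymbol{d}_n/|\boldsymbol{d}_n|$ with respect to $\boldsymbol{p}_{\mathrm{RX},m}$ is $(\mathbf{I}-\boldsymbol{u}_n\boldsymbol{u}_n^{\mathsf{T}})/|\boldsymbol{d}_n|$, where $\boldsymbol{u}_n=\boldsymbol{d}_n/|\boldsymbol{d}_n|$; this is the same direction vector $\boldsymbol{u}$ that appears in \cref{lemma:beta}. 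Hence the $n$-th row of $\mathbf{G}_m$ is
$$-\frac{1}{\lambda|\boldsymbol{d}_n|}\boldsymbol{v}_n^{\mathsf{T}}(\mathbf{I}-\boldsymbol{u}_n\boldsymbol{u}_n^{\mathsf{T}}),$$
using the symmetry of the projector. The chain rule applied to $\boldsymbol{e}_m^{\mathsf{T}}\boldsymbol{e}_m$ gives $\nabla(\boldsymbol{e}_m^{\mathsf{T}}\boldsymbol{e}_m)=2\mathbf{G}_m^{\mathsf{T}}\boldsymbol{e}_m$, which is \cref{con:gm_1stgrad} up to the scaling noted above. The sign is consistent because $\mathbf{G}_m$ is defined as the Jacobian of $\boldsymbol{e}_m$ itself, not of $\boldsymbol{z}_m$.

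\emph{Second-order term.} Differentiating again gives the exact Hessian
$$2\mathbf{G}_m^{\mathsf{T}}\mathbf{G}_m+2\sum_{n} e_{m,n}\nabla^2 e_{m,n}.$$
The approximation in the lemma comes from dropping the second sum. This step needs justification rather than computation, and it is the main obstacle. The argument I would give has two parts. First, near a good solution the residuals $e_{m,n}$ are small, namely measurement noise, so the second sum is small relative to the first. Second, $\nabla^2 e_{m,n}$ scales like $|\boldsymbol{v}_n|/(\lambda|\boldsymbol{d}_n|^2)$, which is bounded whenever the station is not close to the target. I would also note that the kinematics-driven initial point $\hat{\boldsymbol{p}}_{\mathrm{RX},m}$ already places the iterate in this small-residual regime. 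Finally, $2\mathbf{G}_m^{\mathsf{T}}\mathbf{G}_m$ is positive semidefinite, and after LM damping it becomes positive definite, which is what the subsequent update needs.
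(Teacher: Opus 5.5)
Your proposal is correct and follows essentially the same route as the paper's Appendix~B: the chain rule for the gradient, the exact Hessian $2\mathbf{G}_m^{\mathsf{T}}\mathbf{G}_m+2\sum_n e_{m,n}\nabla^2 e_{m,n}$, and the Gauss--Newton small-residual argument for dropping the second sum. Your compact row $-\frac{1}{\lambda|\boldsymbol{d}_n|}\boldsymbol{v}_n^{\mathsf{T}}(\mathbf{I}-\boldsymbol{u}_n\boldsymbol{u}_n^{\mathsf{T}})$ agrees with the paper's componentwise entries of $\mathbf{G}_m$, provided its $|\cdot\times\cdot|$ is read as the signed scalar cross product, and you are right that both identities hold only up to the positive factor $\frac{1}{N-1}$, which the paper silently drops.
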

\begin{proof}\it{
 The proof and expression of $\mathbf{G}_m$ is provided in \cref{app:gradient1}.}
\end{proof}

As a result, following the LM algorithm, the $(i\!+\!1)$-th update on $\boldsymbol{p}_{\mathrm{RX},m}^{(\kappa)}$, denoted as $\boldsymbol{p}_{\mathrm{RX},m}^{(\kappa),i+1}$ ($i=0,1,2,...,$), can be written as
\begin{equation}
    \boldsymbol{p}_{\mathrm{RX},m}^{(\kappa),i+1} = \boldsymbol{p}_{\mathrm{RX},m}^{(\kappa),i}  -({\mathbf{G}_m^{(\kappa),i}}^{\mathsf{T}}\mathbf{G}_m^{(\kappa),i}+\mu\mathbf{I})^{-1}{\mathbf{G}_m^{(\kappa),i}}^{\mathsf{T}}\boldsymbol{e}_{m}^{(\kappa),i},
\end{equation}
where $\mu$ is a damping factor, $\mathbf{G}_m^{(\kappa),i}$ and $\boldsymbol{e}_{m}^{(\kappa),i}$ are the Jacobian matrix and error vector defined in Lemma \ref{lemma:gm} with $\boldsymbol{p}_{\mathrm{RX},m}=\boldsymbol{p}_{\mathrm{RX},m}^{(\kappa),i} $, and the initial solution $\boldsymbol{p}_{\mathrm{RX},m}^{(\kappa),0}=\hat{\boldsymbol{p}}_{\mathrm{RX},m}$. Denote $\boldsymbol{p}_{\mathrm{RX},m}^{(\kappa)}$ as the solution after convergence. The solution for the Problem $\mathcal{P}_{2-1}$ can be written as
$$
    \mathbf{R}^{(\kappa)} = [\boldsymbol{p}_{\mathrm{RX},1}^{(\kappa)},\boldsymbol{p}_{\mathrm{RX},2}^{(\kappa)},...,\boldsymbol{p}_{\mathrm{RX},M}^{(\kappa)}].
$$

\subsection{Optimization of $\boldsymbol{p}_1$}

As shown in \cref{con:Jpv}, the trajectory can be equivalently represented by the starting point $\boldsymbol{p}_1$ and the following velocities $(\boldsymbol{v}_1,...,\boldsymbol{v}_{N-1})$. Note that the adjustment of $\boldsymbol{p}_1$ will also lead to the update of the following velocities according to the EKF algorithm. Since there is no closed-form expression of the following velocities $(\boldsymbol{v}_1,...,\boldsymbol{v}_{N-1})$ versus the starting point $\boldsymbol{p}_1$ in EKF filtering, it is proposed to decompose the optimization of $\boldsymbol{p}_1$ and $(\boldsymbol{v}_1,...,\boldsymbol{v}_{N-1})$ into two iterative steps. For the elaboration convenience, the iteration between the Problem $\mathcal{P}_{2-1}$ and $\mathcal{P}_{2-2}$ is referred to as the outer iteration, and the iteration solving the Problem $\mathcal{P}_{2-2}$ is referred to as inner iteration in this part. 

Denote the trajectory obtained from the last ($(\kappa\!-\!1)$-th) outer iteration as $\mathbf{J}^{(\kappa-1)}\triangleq(\boldsymbol{p}_1^{(\kappa-1)},\boldsymbol{v}_1^{(\kappa-1)},...,\boldsymbol{v}_{N-1}^{(\kappa-1)})$. The iterative algorithm for the Problem $\mathcal{P}_{2-2}$ is elaborated below:
\begin{itemize}
    \item[1.] Initialize the inner iteration index as $j=1$, and the initial solution of inner iteration as $\mathbf{J}^{(\kappa,0)}=\mathbf{J}^{(\kappa-1)}$.
    \item[2.] In the $j$-th inner iteration, given the velocities of last inner iteration $(\boldsymbol{v}_1^{(\kappa,j-1)},...,\boldsymbol{v}_{N-1}^{(\kappa,j-1)})$, update the starting point $\boldsymbol{p}_1^{(\kappa,j)}$ according to the LM algorithm as
    \begin{equation}
        \boldsymbol{p}_1^{(\kappa,j)}=\boldsymbol{p}_1^{(\kappa,j-1)}-\Delta \boldsymbol{p}_1^{(\kappa,j-1)},
    \end{equation}
    where 
    $$
    \begin{aligned}
    \Delta \boldsymbol{p}_1^{(\kappa,j-1)}=& ({\mathbf{G}_{\boldsymbol{p}_1}^{(\kappa,j-1)}}^\mathsf{T}\mathbf{G}_{\boldsymbol{p}_1}^{(\kappa,j-1)}+\mu \mathbf{I})^{-1}{\mathbf{G}_{\boldsymbol{p}_1}^{(\kappa,j-1)}}^\mathsf{T}\nonumber\\
    &\cdot\text{vec}(\mathbf{E}_{\mathbf{Z}}^{(\kappa,j-1)}),\nonumber
    \end{aligned}
    $$
    $\mathbf{E}_{\mathbf{Z}}^{(\kappa,j-1)}=\widetilde{\mathbf{Z}} -  \mathbf{Z}(\mathbf{J}^{(\kappa,j-1)},\mathbf{R}^{(\kappa)})$ is the matrix of measurement errors, $\text{vec}(\mathbf{E}_{\mathbf{Z}}^{(\kappa,j-1)})$ represents the vectorization of $\mathbf{E}_{\mathbf{Z}}^{(\kappa,j-1)}$, and
    $\mathbf{G}_{\boldsymbol{p}_1}^{(\kappa,j-1)}$ is the Jacobian matrix of $\mathbf{E}_{\mathbf{Z}}^{(\kappa,j-1)}$ with respect to $\boldsymbol{p}_1^{(\kappa,j-1)}$ at the $j$-th inner iteration\footnote{The derivation of Jacobian matrix $\mathbf{G}_{\boldsymbol{p}_1}^{(\kappa,j-1)}$ is similar to that in \cref{lemma:beta}, which is omitted here due to page limitation.}.   
    \item[3.] Given the starting point $\boldsymbol{p}_1^{(\kappa,j)}$, reconstruct the velocities $(\boldsymbol{v}_1^{(\kappa,j)},...,\boldsymbol{v}_{N-1}^{(\kappa,j)})$ via EKF, thus
    \begin{eqnarray}
       \mathbf{J}^{(\kappa,j)}&\triangleq& (\boldsymbol{p}_1^{(\kappa,j)},\boldsymbol{v}_1^{(\kappa,j)},...,\boldsymbol{v}_{N-1}^{(\kappa,j)})\nonumber\\ &=& f_{EKF}(\boldsymbol{p}_1^{(\kappa,j)},\mathbf{R}^{(\kappa)},\widetilde{\mathbf{Z}}).
    \end{eqnarray}
    \item[4.] Let $j=j+1$, jump to the second step, until convergence or a maximum number of iterations reaches. 
\end{itemize}

\begin{Remark}\it{
In the above algorithm, the velocities of the moving target since the first time interval $(\boldsymbol{v}_1^{(\kappa,j-1)},...,\boldsymbol{v}_{N-1}^{(\kappa,j-1)})$ determine the shape of the trajectory, and the starting point determine where to pin the shape. Hence intuitively, the inner iteration first fix the shape of trajectory and determine where to put the trajectory, then reshape the trajectory according to the starting point.

Note that given the velocities, the partial derivative of the MSE with respective to the starting point can be derived with closed-form expression. Hence, the decomposition in the above algorithm could significantly suppress the computation complexity of gradient-based search algorithm.}    
\end{Remark}

\section{EXPERIMENTS AND DISCUSSIONS}
\subsection{Experimental Settings}
\begin{figure}[h]
    \centering
    \includegraphics[width=0.75\columnwidth]{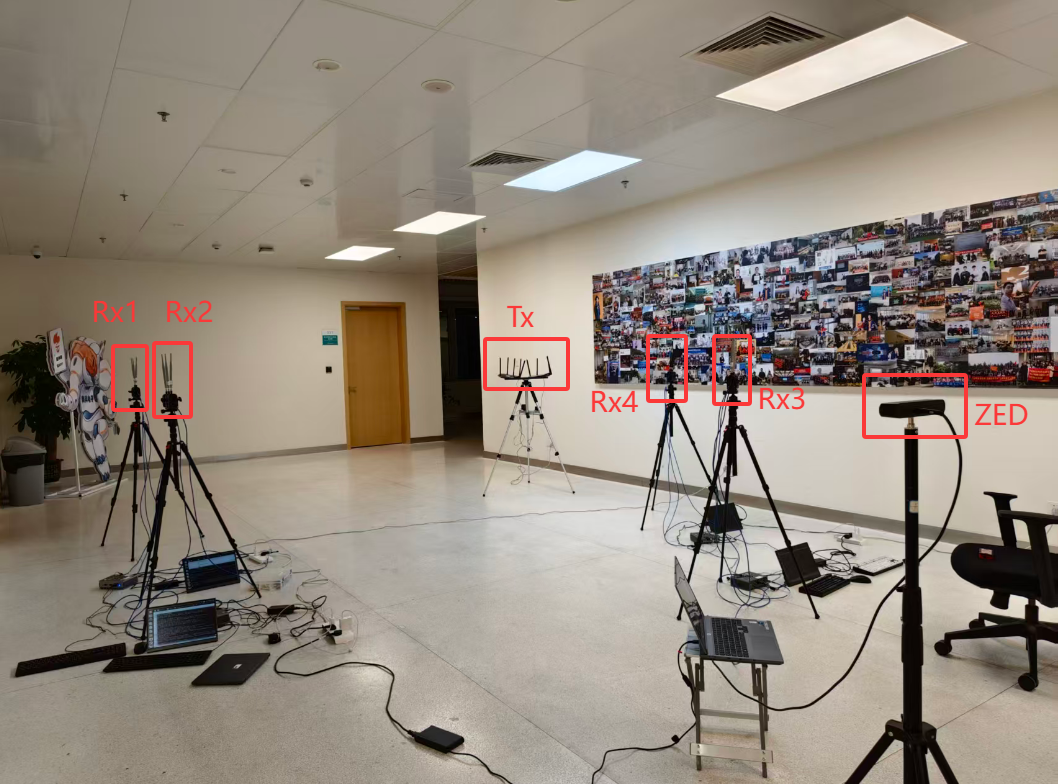}
    \caption{Experiment setup in an indoor corridor scenario.}
    \label{fig:setup}
\end{figure}
\begin{figure*}[!h]
    \centering
    \subfloat[Circle]{\includegraphics[width=1.5in]{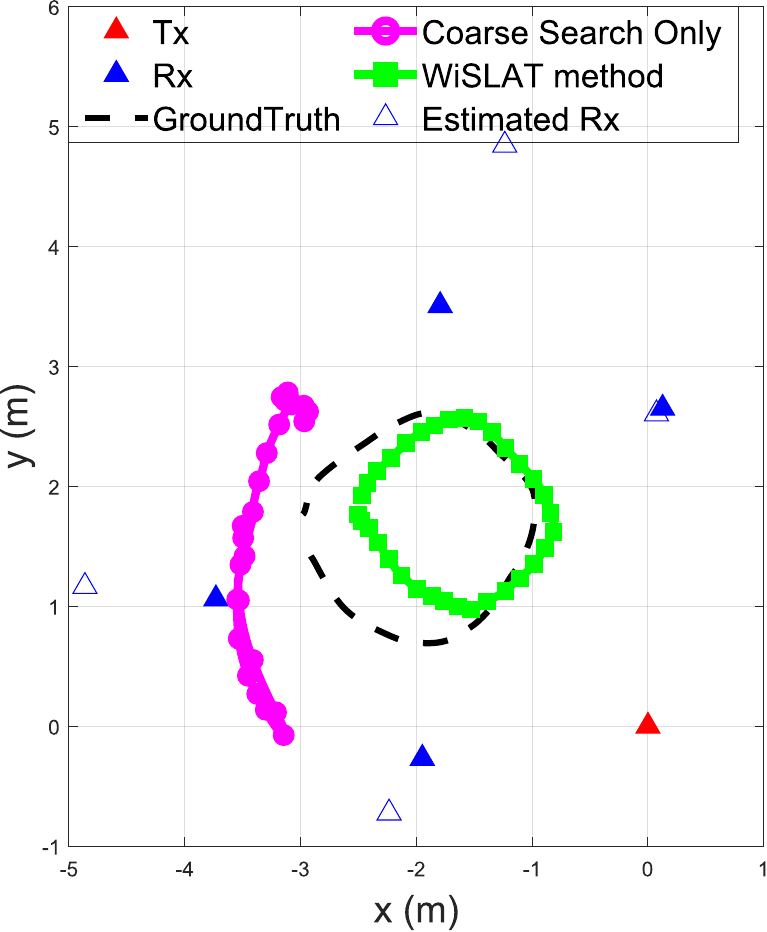}%
    \label{traj1}}
    \hfil
    \subfloat[Square]{\includegraphics[width=1.5in]{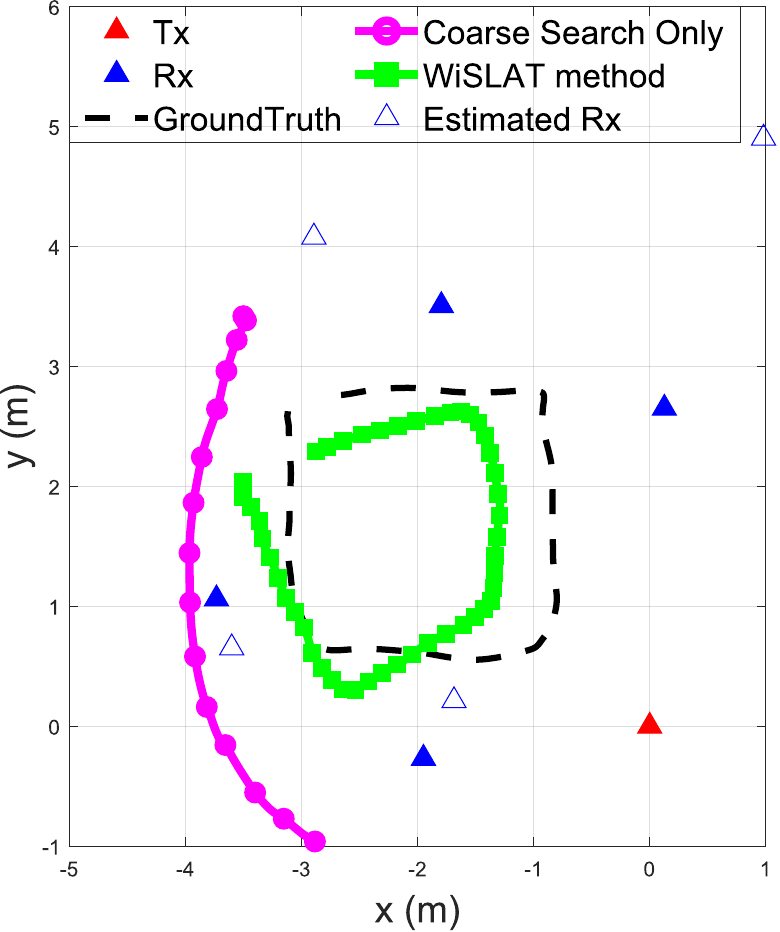}%
    \label{traj2}}
    \hfil
    \subfloat[Triangle]{\includegraphics[width=1.5in]{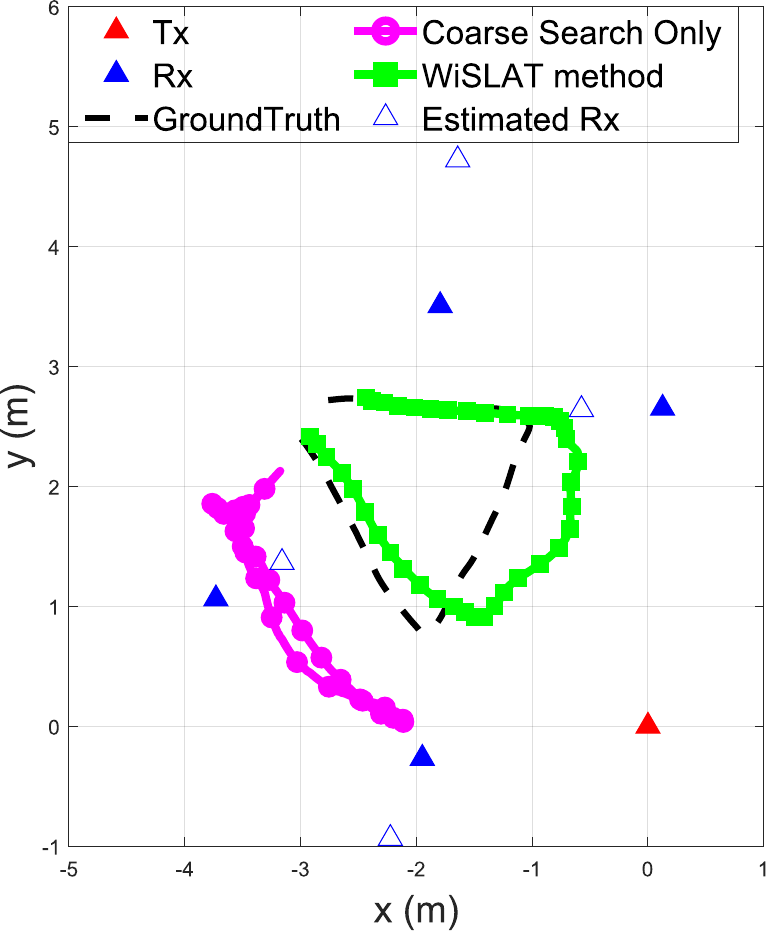}%
    \label{traj3}}
    \hfil
    \subfloat[CDF of tracking error]{\includegraphics[width=1.2in]{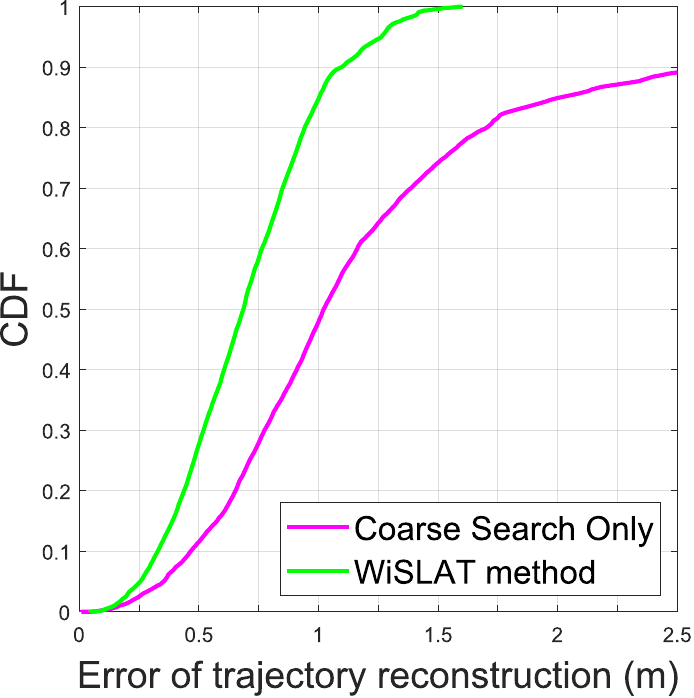}%
    \label{cdf_1}}
    \hfil
    \subfloat[CDF of Localization error]{\includegraphics[width=1.2in]{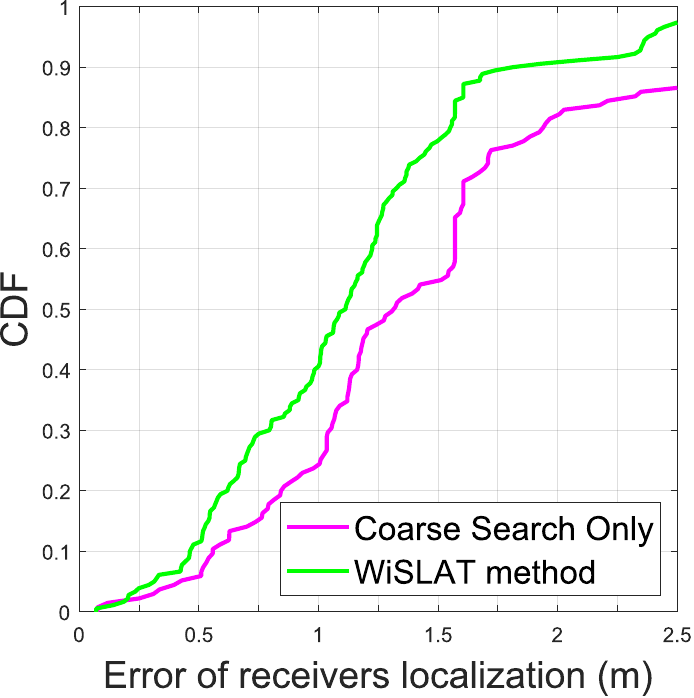}%
    \label{cdf_2}}
    \hfil
    \caption{(a-c) Illustration of WiSLAT trajectory reconstruction and receivers localization in three samples.\\(d-e) CDF of (a) reconstructed trajectory error and (b) receivers localization error.}
    \label{fig:res}
\end{figure*}

We used a TP-LINK TL-XDR1520 router as the AP and four mini PCs equipped with Intel 5300 NIC as the Rx, each equipped with three antennas. The four Rx were randomly distributed around the AP, and their locations were unknown in the trajectory tracking. The Linux 802.11n CSI Tool \cite{csitool} was used to collect CSI data from Rx. The carrier frequency of the AP signal is 5.24 GHz with a 20 MHz bandwidth. Since the Doppler 
frequencies introduced by human movement in indoor environments does not exceed 50Hz at 5.24 GHz, a sampling rate of 100Hz (i.e., a 10 ms interval) was used for CSI collection. 

The moving target is one volunteer. We employed depth camera to record the ground truth of the volunteer's trajectories, where the midpoint of the chest and abdomen was chosen as representative of the volunteer's position. The height is approximately 1.3 m. 

Since the Doppler detection method relies on the dominant LoS path between the AP and the receive stations, it is necessary to exclude the receive station with blocked LoS path in Doppler detection and EKF-based trajectory reconstruction. Particularly, a significant attenuation in received signal strength is treated as a LoS blockage. Since there are four receive stations in the experiment, there is always at least three available receive stations in sensing.

The experiment scenario is shown in \cref{fig:setup}. We evaluated the simultaneous localization and track (SLAT) performance of our proposed method in an indoor corridor. The distances between the AP and the receive stations are $2-4\,\text{m}$, and the entire sensing area is approximately $5\text{m} \times 5\text{m}$. 3 different shapes of trajectories, including circular, square, and triangular, were sensed. Each trajectory shape was tested by 3 different volunteers, and each volunteer repeat each shape by 5 times. Hence, totally 45 trajectories were sensed in the experiment.

\subsection{Experimental Results}

\cref{traj1,traj2,traj3} illustrate the reconstructed trajectories and the estimated locations of the receive stations versus the ground truth in three examples, including the circular, square, and triangular trajectories. The purple and green trajectories refer to the results of coarse search and the overall proposed WiSLAT method. It can be observed that fine optimization after coarse search is necessary, as the purple curves deviate from the ground truth a lot and the green ones are close to the ground truth. Moreover, in most of the cases, the location errors of the receive stations are less than 1 m.

In order to see the overall performance of the proposed WiSLAT method, the statistics of all 45 trajectories are illustrated in \cref{cdf_1,cdf_2}, where Cumulative Distribution Functions (CDFs) of tracking error and localization error counting the 45 trajectories are plotted. It reveals that, with coarse search only, the 50\% error for tracking is 1.02 m whereas the 50\% error for Rx localization is 1.3 m. On the other hand, with the complete WiSLAT method, the 50\% error for tracking is 0.68 m whereas the 50\% error for Rx localization is 1.07 m.
% \begin{figure}[h]
%     \centering
%     \subfloat[]{\includegraphics[width=0.4\columnwidth]{cdf1.pdf}%
%     \label{cdf_1}}
%     \hfil
%     \subfloat[]{\includegraphics[width=0.4\columnwidth]{cdf2.pdf}%
%     \label{cdf_2}}
%     \caption{CDF of (a) reconstructed trajectory error and (b) receivers localization error.}
%     \label{fig:cdf}
% \end{figure}

The above results show that our system can correctly reconstruct the trajectory and detect the receive stations' locations. In fact, two receive stations are sufficiently to reconstruct two-dimensional trajectories, if their locations are known in advance. On the other hand, without the knowledge on the receive stations' locations, three receive stations become necessary in simultaneous localization and tracking. The additional dimension of Doppler measurements is sufficient to compensate the uncertainty on the receive stations' locations. Considering the requirement of LoS link, we deployed four receivers to ensure at least three LoS links exist at any time. In fact, more Doppler measurements could provide more redundancy against the measurement noise.

Finally, note that the localization accuracy is generally worse than the trajectory tracking accuracy in some trajectories. It can be observed that different trajectories lead to different localization accuracy for different receive stations. This could motivate a joint design method, which average the locations of the receive stations in all 45 trajectories. It can be observed from \cref{fig:ave_Rx} that, by simply averaging the estimates across all trajectories, diverging errors in Rx localization are effectively mitigate.
\begin{figure}[h]
    \centering
    \includegraphics[width=0.75\columnwidth]{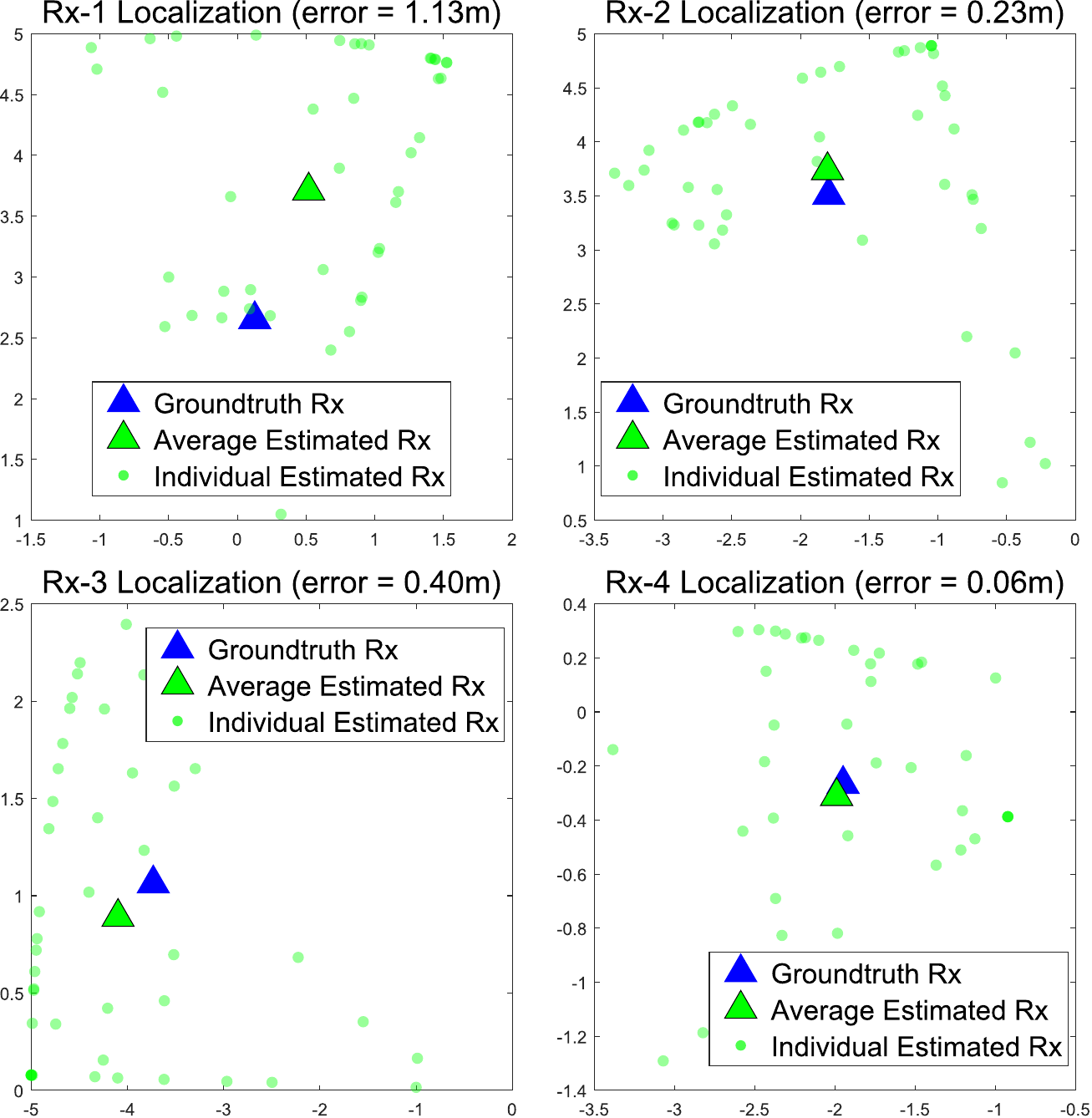}
    \caption{Final estimated Rx after averaging multiple observation.}
    \label{fig:ave_Rx}
\end{figure}

\section{Conclusion}
In this paper, a simultaneous localization and tracking method, namely WiSLAT, is proposed for a Wi-Fi communication system with at least three receive stations. It relies on the Doppler measurements of one single moving target at the receive stations to reconstruct the trajectory of the moving target, and meanwhile estimate the locations of the receive stations. The overall tracking and localization problem is formulated as an MMSE problem, and a two-stage optimization method is proposed to solve it with a low computational complexity. The performance of the WiSLAT is evaluated by experiment in a indoor scenario. It is shown that the median of the tracking and localization errors are 0.68m and 1.07m. Finally, by averaging the estimated locations of the receive stations, the localization accuracy is suppressed to 0.45m on average. This motivates a joint simultaneous localization and tracking design with multiple trajectories, which would be left for our future study.
 
\appendices
\section{EKF for Trajectory Reconstruction}\label{app:EKFAppendix}

Let $\boldsymbol{p}_n=[x_n, y_n]^\mathsf{T}$ and $\boldsymbol{v}_n=[v_{x_n}, v_{y_n}]^\mathsf{T}$ be the position and velocity of the moving target at the $n$-th time instance. Define the state vector at the $n$-th time instance as $\boldsymbol{s}_n = [x_n, y_n, v_{x_n}, v_{y_n}]^\mathsf{T}$, encompassing position and velocity. The state transition can be written as
\begin{equation}
    \boldsymbol{s}_{n} = \mathbf{F} \boldsymbol{s}_{n-1}+\boldsymbol{w}_{n} , 
\end{equation}
where $\mathbf{F}$ denotes the state transition matrix, and $\boldsymbol{w}_n\sim \mathcal{N}(0,\mathbf{W})$ is the process noise. The state transition matrix $\mathbf{F}$ is defined as
\begin{equation}
\mathbf{F} = \begin{bmatrix}
    1 & 0 & \Delta t & 0 \\
    0 & 1 & 0 &  \Delta t\\
    0 & 0 & 1 & 0 \\
    0 & 0 & 0 & 1 \\
    \end{bmatrix},
\end{equation}
and the covariance matrix $\mathbf{W}$ is defined as
\begin{equation}
    \mathbf{W} = \begin{bmatrix}
    \sigma_x^2 \frac{\Delta t^3}{3} & 0 & \sigma_x^2 \frac{\Delta t^2}{2} & 0 \\
    0 & \sigma_y^2 \frac{\Delta t^3}{3} & 0 & \sigma_y^2 \frac{\Delta t^2}{2} \\
    \sigma_x^2 \frac{\Delta t^2}{2} & 0 & \sigma_{v_x}^2 \Delta t & 0 \\
    0 & \sigma_y^2 \frac{\Delta t^2}{2} & 0 & \sigma_{v_y}^2 \Delta t \\
    \end{bmatrix}.
\end{equation}

Moreover, the measurement vector $\tilde{\boldsymbol{z}}_n$ at the $n$-th time instance is defined as
\begin{equation}
    \tilde{\boldsymbol{z}}_n = \boldsymbol{h}(\boldsymbol{s}_{n})+\boldsymbol{u}_{n},
\end{equation}
where 
\begin{equation}
    \boldsymbol{h}(\boldsymbol{s}_{n}) = 
    \begin{bmatrix}
    \frac{1}{\lambda}\left(\frac{\boldsymbol{p}_\mathrm{TX} - \boldsymbol{p}_n}{|\boldsymbol{p}_\mathrm{TX} - \boldsymbol{p}_n|} + \frac{\boldsymbol{p}_{\mathrm{RX},1} -\boldsymbol{p}_n}{|\boldsymbol{p}_{\mathrm{RX},1} - \boldsymbol{p}_n|}\right) \boldsymbol{v}_n\\
    \vdots\\
    \frac{1}{\lambda}\left(\frac{\boldsymbol{p}_\mathrm{TX} - \boldsymbol{p}_n}{|\boldsymbol{p}_\mathrm{TX} - \boldsymbol{p}_n|} + \frac{\boldsymbol{p}_{\mathrm{RX},M} -\boldsymbol{p}_n}{|\boldsymbol{p}_{\mathrm{RX},M} - \boldsymbol{p}_n|}\right) \boldsymbol{v}_n
    \end{bmatrix}\ ,
\end{equation}
is the nonlinear measurement function, and  $\boldsymbol{u}_n\sim\mathcal{N}(0,\mathbf{U})$ is the measurement noise with $\mathbf{U} = \sigma_{f^d} \cdot\mathbf{I}$. 

The trajectory reconstruction procedure employing EKF consists of two iterative steps: state prediction and state update. 
Let $\boldsymbol{s}_{n|n-1}$ and $\mathbf{P}_{n|n-1}$ denote predicted system state and the corresponding covariance matrix of the $n$-th time instance at the $(n-1)$-th time instance, and $\boldsymbol{s}_{n|n}$ and $\mathbf{P}_{n|n}$ denote the updated ones at the $n$-th time instance based on latest observations. We have the state prediction as
\begin{equation}
    \begin{cases}
    \boldsymbol{s}_{n|n-1}=\mathbf{F}\boldsymbol{s}_{n-1|n-1}\\
    \mathbf{P}_{n|n-1}=\mathbf{F}\mathbf{P}_{n-1|n-1}\mathbf{F}^{\mathsf{T}}+\mathbf{W}
\end{cases},
\end{equation}
and state update as
\begin{equation}
    \begin{cases}
    \mathbf{s}_{n|n}=\mathbf{s}_{n|n-1} + \mathbf{K}_n[\tilde{\boldsymbol{z}}_n - \boldsymbol{h}(\boldsymbol{s}_{n|n-1})]\\
    \mathbf{P}_{n|n}=(\mathbf{I}-\mathbf{K}_n\mathbf{D}_n)\mathbf{P}_{n|n-1}
\end{cases} ,
\end{equation}
where the Kalman gain matrix $\mathbf{K}_n$ is given by
\begin{equation}
    \mathbf{K}_n = \mathbf{P}_{n|n-1}\mathbf{D}_n^\mathsf{T}(\mathbf{D}_n\mathbf{P}_{n|n-1}\mathbf{D}_n^\mathsf{T} + \mathbf{U})^{-1}.
\end{equation}
Moreover, the Jacobian matrix $\mathbf{D}_n$ of $\boldsymbol{h}(s_{n|n-1})$ is written as
\begin{equation}
    \mathbf{D}_n = \begin{bmatrix}
    \frac{\partial f^d_{1}(n)}{\partial x_n}&\frac{\partial f^d_{1}(n)}{\partial y_n}&\frac{\partial f^d_{1}(n)}{\partial v_{x_n}}&\frac{\partial f^d_{1}(n)}{\partial v_{y_n}}\\
    \vdots&\vdots&\vdots&\vdots\\
    \frac{\partial f^d_{M}(n)}{\partial x_n}&\frac{\partial f^d_{M}(n)}{\partial y_n}&\frac{\partial f^d_{M}(n)}{\partial v_{x_n}}&\frac{\partial f^d_{M}(n)}{\partial v_{y_n}}\\
    \end{bmatrix},
\end{equation}
where the partial derivatives are given by
\begin{equation}
    \begin{cases}
    \frac{\partial f^d_{m}(n)}{\partial x_n}=\!\!\!\! & \frac{1}{\lambda}\{-v_{x_n}[\frac{(y_n-y_\mathrm{TX})^2}{|\boldsymbol{p}_n-\boldsymbol{p}_\mathrm{TX}|^3}+\frac{(y_n-y_{\mathrm{RX},m})^2}{|\boldsymbol{p}_n-\boldsymbol{p}_{\mathrm{RX},m}|^3}] \\
     &\!\!\!\!+v_{y_n}[\frac{(y_n-y_\mathrm{TX})(x_n-x_\mathrm{TX})}{|\boldsymbol{p}_n-\boldsymbol{p}_\mathrm{TX}|^3}+\frac{(y_n-y_{\mathrm{RX},m})(x_n-x_{\mathrm{RX},m})}{|\boldsymbol{p}_n-\boldsymbol{p}_{\mathrm{RX},m}|^3}]\}, \\
    \frac{\partial f^d_{m}(n)}{\partial y_n}=\!\!\!\! &
    \frac{1}{\lambda}\{v_{x_n}[\frac{(x_n-x_\mathrm{TX})(y_n-y_\mathrm{TX})}{|\boldsymbol{p}_n-\boldsymbol{p}_\mathrm{TX}|^3}+\frac{(x_n-x_{\mathrm{RX},m})(y_n-y_{\mathrm{RX},m})}{|\boldsymbol{p}_n-\boldsymbol{p}_{\mathrm{RX},m}|^3}] \\
     &\!\!\!\!-v_{y_n}[\frac{(x_n-x_\mathrm{TX})^2}{|\boldsymbol{p}_n-\boldsymbol{p}_\mathrm{TX}|^3}+\frac{(x_n-x_{\mathrm{RX},m})^2}{|\boldsymbol{p}_n-\boldsymbol{p}_{\mathrm{RX},m}|^3}]\}, \\
     \frac{\partial f^d_{m}(n)}{\partial v_{x_n}}=\!\!\!\! & \frac{1}{\lambda}[\frac{x_\mathrm{TX}-x_n}{|\boldsymbol{p}_n-\boldsymbol{p}_\mathrm{TX}|}+\frac{x_{\mathrm{RX},m}-x_n}{|\boldsymbol{p}_n-\boldsymbol{p}_{\mathrm{RX},m}|}],\\
     \frac{\partial f^d_{m}(n)}{\partial v_{y_n}}=\!\!\!\! & \frac{1}{\lambda}[\frac{y_\mathrm{TX}-y_n}{|\boldsymbol{p}_n-\boldsymbol{p}_\mathrm{TX}|}+\frac{y_{\mathrm{RX},m}-y_n}{|\boldsymbol{p}_n-\boldsymbol{p}_{\mathrm{RX},m}|}].\\
    \end{cases}
\label{con:jacobian}
\end{equation}

\section{Gradient Descent for $\boldsymbol{p}_{\mathrm{RX},m}$}\label{app:gradient1}

The gradient in \cref{con:gm_1stgrad} is straightforward. Moreover, the Jacobian matrix $\mathbf{G}_m$ of $\boldsymbol{e}_{m}$ is defined as 
\begin{equation}
    \mathbf{G}_m = \frac{\partial \boldsymbol{e}_{m}}{\partial\boldsymbol{p}_{\mathrm{RX},m}} = \begin{bmatrix}
    \vdots&\vdots\\
    \frac{\partial  \boldsymbol{e}_{m}(n)}{\partial x_{\mathrm{RX},m}}&\frac{\partial  \boldsymbol{e}_{m}(n)}{\partial y_{\mathrm{RX},m}}\\
    \vdots&\vdots
    \end{bmatrix}\in \mathbb{R}^{(N-1)\times 2}.
\end{equation}
Denote $ \mathbf{G}_m({n,j})$ as the $(n,j)$-th entry ($n=1,2,...,N-1;j=1,2$) of the Jacobian matrix $\mathbf{G}_m$, they can be derived as
\begin{equation}
    \begin{aligned}
    \mathbf{G}_m({n,1}) &= -\frac{\partial f^d_m(n)}{\partial x_{\mathrm{RX},m}}=\frac{|(\boldsymbol{p}_n-\boldsymbol{p}_{\mathrm{RX},m})\times\boldsymbol{v}_n|}{\lambda|\boldsymbol{p}_n-\boldsymbol{p}_{\mathrm{RX},m}|^3}(y_n-y_{\mathrm{RX},m}),\\
    \mathbf{G}_m({n,2}) &= -\frac{\partial f^d_m(n)}{\partial y_{\mathrm{RX},m}}=\frac{|\boldsymbol{v}_n\times(\boldsymbol{p}_n-\boldsymbol{p}_{\mathrm{RX},m})|}{\lambda|\boldsymbol{p}_n-\boldsymbol{p}_{\mathrm{RX},m}|^3}(x_n-x_{\mathrm{RX},m}).
    \end{aligned}
\end{equation}

Moreover, the second-order gradient (Hessian matrix) of $g_m(\mathbf{J}^{(\kappa-1)},\boldsymbol{p}_{\mathrm{RX},m},\widetilde{\mathbf{z}}_m))$ is given by
\begin{equation}
    \nabla^2 g_m(\mathbf{J}^{(\kappa-1)},\boldsymbol{p}_{\mathrm{RX},m},\widetilde{\mathbf{z}}_m)\!=\!2\mathbf{G}_m^{\mathsf{T}}\mathbf{G}_m\!+\!2\sum^{N-1}_i \boldsymbol{e}_{m}(i)\nabla^2\boldsymbol{e}_{m}(i)\label{con:b}.
\end{equation}

By the Gauss–Newton approximation, the second-order term in \cref{con:b} is neglected by assuming small residuals near the optimum. Thus, the Hessian matrix is approximated as
\begin{equation}
    \mathbf{B} \approx 2\mathbf{G}_m^{\mathsf{T}}\mathbf{G}_m.
\end{equation}

\bibliographystyle{IEEEtran}
\bibliography{Reference}

\vspace{12pt}

\end{document}